\definecolor{darkred}  {rgb}{0.5,0,0}
\definecolor{darkblue} {rgb}{0,0,0.5}
\definecolor{darkgreen}{rgb}{0,0.5,0}
\newcommand{\Thm}[1]{\hyperref[thm:#1]{Theorem~\ref*{thm:#1}}}
\newcommand{\Clm}[1]{\hyperref[claim:#1]{Claim~\ref*{claim:#1}}}
\newcommand{\Sec}[1]{\hyperref[sec:#1]{Section~\ref*{sec:#1}}}
\newcommand{\App}[1]{\hyperref[app:#1]{Appendix~\ref*{app:#1}}}
\newcommand{\Eq}[1]{eq.~\hyperref[eq:#1]{(\ref*{eq:#1})}}
\DeclarePairedDelimiter{\set}{\lbrace}{\rbrace}
\DeclarePairedDelimiter{\abs}{\lvert}{\rvert}
\DeclarePairedDelimiter{\norm}{\lVert}{\rVert}
\DeclarePairedDelimiter{\of}{\lparen}{\rparen}
\newcommand{\ct}{^{\dagger}}
\newcommand{\U}[1]{\mathrm{U}(#1)}
\newcommand{\SU}[1]{\mathrm{SU}(#1)}
\newcommand{\SL}[1]{\mathrm{SL}(#1,\C)}
\newcommand{\PU}[1]{\mathrm{PU}(#1)}
\newcommand{\PSU}[1]{\mathrm{PSU}(#1)}
\renewcommand{\S}[1]{\mathrm{S}_{#1}} 
\newcommand{\C}{\mathbb{C}}
\newcommand{\R}{\mathbb{R}}
\newcommand{\Z}{\mathbb{Z}}
\DeclareMathOperator{\Tr}{Tr}
\DeclareMathOperator{\sgn}{sgn} 
\DeclareMathOperator{\poly}{poly}
\DeclareMathOperator{\polylog}{polylog}
\newcommand{\idp}{\varepsilon} 
\newcommand{\G}{\mathcal{G}} 
\newcommand{\Op}{\mathcal{O}}
\renewcommand{\rho}{R} 
\newtheorem{theorem}{Theorem}
\newtheorem*{theorem*}{Theorem}
\newtheorem*{conjecture*}{Conjecture}
\newtheorem{claim}{Claim}
\newcommand{\partitle}[1]{\textit{#1.}}
\title{Trading inverses for an irrep in the Solovay-Kitaev theorem}
\author[1]{Adam Bouland}
\author[2]{M\={a}ris Ozols}
\affil[1]{Department of EECS, UC Berkeley, Berkeley, CA USA}
\affil[2]{QuSoft and University of Amsterdam, Amsterdam, Netherlands}
\begin{document}
\date{}
\maketitle

\begin{abstract}
The Solovay-Kitaev theorem states that universal quantum gate sets can be exchanged with low overhead.
More specifically, any gate on a fixed number of qudits can be simulated with error $\epsilon$ using merely $\polylog(1/\epsilon)$ gates from any finite universal quantum gate set $\G$.
One drawback to the theorem is that it requires the gate set $\G$ to be closed under inversion.
Here we show that this restriction can be traded for the assumption that $\G$ contains an irreducible representation of any finite group $G$.
This extends recent work of Sardharwalla \emph{et al.}~\cite{refocusing},
and applies also to gates from the special linear group.
Our work can be seen as partial progress towards the long-standing open problem of proving an inverse-free Solovay-Kitaev theorem \cite{DawsonSolovayKitaev,kuperberg2009hard}.
\end{abstract}

\section{Introduction} \label{sec:Intro}

Quantum computing promises to solve certain problems exponentially faster than classical computers. For instance, quantum computers can factor integers \cite{shor1999polynomial}, simulate quantum mechanics \cite{berry2015hamiltonian}, and compute certain knot invariants \cite{aharonov2009polynomial} exponentially faster than the best known classical algorithms.
The power of quantum computing is formalized using the notion of quantum circuits, in which polynomial number of quantum gates are applied to a standard input state, and the answer to the computational problem is obtained by measuring the final state. This results in the complexity class $\textsf{BQP}$ (see \cite{nielsen2002quantum,Kitaev-book} for an introduction).

Each gate in the circuit is a unitary transformation drawn from some finite \emph{gate set} $\G$; it represents elementary quantum operations that can be performed in hardware and which may vary between different realizations of quantum computing. Each gate can act on at most some finite number $k$ of quantum systems at a time, where each individual system (or \emph{qudit}) has $d$ levels. A gate set $\G$ is called \emph{universal}\footnote{This is also known as \emph{physical universality}.} if it is capable of approximately generating any quantum transformation on a sufficiently large number of qudits~\cite{CLMO11}.

In general, the computational power of a quantum device may depend on the gate set $\G$ at its disposal. Clearly, if a gate set is not universal, it may have restricted computational power.\footnote{But not always! Some gate sets which are not physically universal are nevertheless capable of universal quantum computing via an encoding; this is known as \emph{encoded universality} \cite{nielsen2002quantum}.} But \emph{a priori}, the computational power of different universal gate sets could vary as well.
This is because universality simply implies that the gates from $\G$ densely generate all unitaries, but it does not specify how quickly one can approximate arbitrary gates.\footnote{By a simple counting argument, generic unitaries on an $n$-qubit system require $\tilde{\Omega}(2^n)$ gates to implement (even approximately) irrespectively of the gate set~\cite[Section~4.5.4]{nielsen2002quantum}.}

While $\mathsf{BQP}$ consists of those computations that use $\poly(n)$ gates on an $n$-bit input, the degree of the polynomial for a specific algorithm could in principle depend on the actual gate set used. For example, if we are given an $O(n)$-gate algorithm over some gate set and we want to implement it using another gate set $\G$, we have to compile each gate to accuracy $O(1/n)$ in terms of $\G$. However, if our compiler uses, say, $O(1/\epsilon)$ gates to achieve accuracy $O(\epsilon)$, the total number of gates would become $O(n^2)$. This would be a strange situation for quantum computation, since the runtime of polynomial-time algorithms would be defined only up to polynomial factors. In particular, this would render Grover's speed-up useless.

Fortunately, this is not the case since the \emph{Solovay-Kitaev theorem} \cite{Kitaev,Kitaev-book,nielsen2002quantum,DawsonSolovayKitaev} (see also \cite{ChildsQA,MarisSK}) provides a better compiler, so long as the universal gate set $\G$ is closed under inversion.
More specifically, this theorem states that any universal gate set $\G$ can be used to simulate any gate $U$ from any other universal gate set to accuracy $\epsilon$ using only $\polylog(1/\epsilon)$ gates from $\G$. Furthermore, there is an efficient algorithm, the \emph{Solovay-Kitaev algorithm}, to perform this conversion between the gate sets.

Before formally stating the Solovay-Kitaev theorem, let us make a few remarks.
First, we can assume without loss of generality that all gates in $\G$ are single-qudit gates in some fixed dimension $d$. Indeed, if $\G$ contains multi-qudit gates or if $\G$ becomes universal only on some larger number of qudits, we can simply set the new dimension to be $d^k$ (for a sufficiently large constant $k$) and replace $\G$ by a larger gate set that consists of the original gates acting on all ordered subsets of $k$ systems.
Second, as we are now dealing with a single system, we can replace the universality of $\G$ by a requirement that $\G$ generates a \emph{dense} subgroup of $\SU{d}$~\cite{SK17}.
Third, we can assume that $\G$ is itself a subset of the \emph{special} unitary group $\SU{d}$ rather than $\U{d}$, since the global phase of a quantum gate has no physical effect. In fact, $\U{1}$ actually does \emph{not} satisfy the Solovay-Kitaev theorem, hence the theorem does not hold for $\U{d}$ either, because in general we cannot approximate the elements of $\U{d}$ accurately enough due to their global phase.

With this fine print aside, we are now ready to state the theorem.

\begin{theorem*}[Solovay-Kitaev theorem~\cite{DawsonSolovayKitaev}]
For any fixed $d \geq 2$, if $\G \subset \SU{d}$ is a finite gate set which is \textbf{closed under inverses} and densely generates $\SU{d}$, then there is an algorithm which outputs an $\epsilon$-approximation to any $U \in \SU{d}$ using merely
$O\of[\big]{\log^{3.97}(1/\epsilon)}$
elements from $\G$.
\end{theorem*}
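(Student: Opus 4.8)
The plan is to run the standard iterative-refinement construction underlying the Solovay-Kitaev algorithm. One builds a nested family of finite sets $\G_0 \subseteq \G_1 \subseteq \dots$ of elements of $\SU d$, where $\G_\ell$ consists of elements representable as words over $\G$ of some length $L_\ell$ and forms an $\epsilon_\ell$-net of (a fixed neighbourhood of the identity in) $\SU d$, with $\epsilon_\ell$ decreasing rapidly in $\ell$. The base case is the \emph{basic approximation lemma}: since $\langle\G\rangle$ is dense in the compact group $\SU d$, for any target accuracy $\epsilon_0$ there is a constant $L_0$ so that $\G$-words of length at most $L_0$ already form an $\epsilon_0$-net $\G_0$. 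Closure under inverses is used here (and, crucially, below) to guarantee that the inverse of such a word is again a word over $\G$ of the same length; we take $\epsilon_0$ to be a sufficiently small absolute constant.

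The engine of the recursion is a quantitative \emph{group-commutator shrinking lemma}: there is an absolute constant $C$ such that every $\Delta \in \SU d$ with $\norm{\Delta - I} \le \delta$ (for $\delta$ below some fixed threshold) can be written \emph{exactly} as $\Delta = V W V\ct W\ct$ with $\norm{V - I}, \norm{W - I} \le C\sqrt{\delta}$. I would prove this by working in the Lie algebra $\mathfrak{su}(d)$: write $\Delta = e^{A}$ with $\norm{A} = O(\delta)$; using that matrix brackets span $\mathfrak{su}(d)$, solve $[X, Y] = A$ with $\norm{X}, \norm{Y} = O(\sqrt{\delta})$; then the Baker-Campbell-Hausdorff expansion gives $\log(e^X e^Y e^{-X} e^{-Y}) = [X, Y] + O((\norm{X}+\norm{Y})^3) = A + O(\delta^{3/2})$, and a quantitative inverse-function-theorem correction to $X, Y$ absorbs the $O(\delta^{3/2})$ error so that the commutator identity becomes exact, changing $\norm{X}, \norm{Y}$ by at most a constant factor.

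For the inductive step, given $\G_{\ell-1}$ (an $\epsilon_{\ell-1}$-net of words of length $L_{\ell-1}$) and a target $U$, pick $U_{\ell-1} \in \G_{\ell-1}$ with $\norm{U - U_{\ell-1}} \le \epsilon_{\ell-1}$, set $\Delta := U U_{\ell-1}\ct$ (so $\norm{\Delta - I} \le \epsilon_{\ell-1}$ and $\Delta U_{\ell-1} = U$), apply the shrinking lemma to get $V, W$ within $O(\sqrt{\epsilon_{\ell-1}})$ of $I$, approximate each by $\tilde V, \tilde W \in \G_{\ell-1}$ to accuracy $\epsilon_{\ell-1}$, and output $U_\ell := \tilde V \tilde W \tilde V\ct \tilde W\ct U_{\ell-1}$, which is a word over $\G$ of length $L_\ell = 5 L_{\ell-1}$ --- and this step is exactly where \textbf{closure under inverses} is indispensable, since $\tilde V\ct$ and $\tilde W\ct$ must themselves be $\G$-words. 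The crucial estimate is the \emph{group-commutator stability bound}: because $V, W, \tilde V, \tilde W$ are all $O(\sqrt{\epsilon_{\ell-1}})$-close to $I$ while $\tilde V, \tilde W$ are only $\epsilon_{\ell-1}$-close to $V, W$, one gets $\norm{V W V\ct W\ct - \tilde V \tilde W \tilde V\ct \tilde W\ct} = O(\epsilon_{\ell-1} \cdot \sqrt{\epsilon_{\ell-1}}) = O(\epsilon_{\ell-1}^{3/2})$ --- the extra factor $\sqrt{\epsilon_{\ell-1}}$ arising because the leading error terms sit inside matrix commutators against the small matrices $X, Y$ --- so by unitary invariance $\norm{U - U_\ell} = \norm{\Delta - \tilde V \tilde W \tilde V\ct \tilde W\ct} \le c\,\epsilon_{\ell-1}^{3/2} =: \epsilon_\ell$ for an absolute constant $c$. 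Choosing $\epsilon_0 < 1/c^2$ (and below the lemma's threshold), the substitution $x_\ell := c^2 \epsilon_\ell$ turns the recursion into $x_\ell = x_{\ell-1}^{3/2}$, so $\epsilon_\ell$ decays doubly exponentially and $\ell = O(\log\log(1/\epsilon))$ rounds suffice to reach accuracy $\epsilon$; then $L_\ell = 5^\ell L_0 = (\log(1/\epsilon))^{\log_{3/2} 5} \cdot O(1) = O\of[\big]{\log^{3.97}(1/\epsilon)}$ since $\log_{3/2} 5 \approx 3.97$. As net construction, word inversion, and the Lie-algebraic computations are all efficient, this also furnishes the algorithm.

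The main obstacle is the shrinking lemma of the second paragraph: promoting the purely infinitesimal observation that commutators span $\mathfrak{su}(d)$ to a \emph{uniform}, quantitatively controlled statement that near-identity elements are \emph{exact} balanced commutators with the all-important $\sqrt{\delta}$ scaling --- a weaker scaling (say $\delta$) would yield only linear convergence and the theorem would collapse --- together with the accompanying bookkeeping needed to make the error and length recursions genuinely close, which forces $\epsilon_0$ to be chosen small relative to $c$ and $C$.
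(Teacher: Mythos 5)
The paper does not prove this theorem; it is stated as known background and attributed to Dawson and Nielsen~\cite{DawsonSolovayKitaev}, so there is no in-paper proof to compare against. Your sketch is a faithful reconstruction of the standard Dawson--Nielsen argument: basic approximation to seed an $\epsilon_0$-net, the balanced group-commutator decomposition of near-identity elements with the $\sqrt{\delta}$ scaling, the stability estimate $\norm{VWV\ct W\ct - \tilde V \tilde W \tilde V\ct \tilde W\ct} = O(\epsilon^{3/2})$, and the coupled recursions $\epsilon_\ell \sim \epsilon_{\ell-1}^{3/2}$, $L_\ell = 5L_{\ell-1}$ yielding exponent $\log_{3/2} 5 \approx 3.97$. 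The structure, the role of inverses, and the final bookkeeping all match.

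One small point worth flagging: you insist on an \emph{exact} commutator decomposition $\Delta = VWV\ct W\ct$, patched up by an inverse-function-theorem correction. This adds an unnecessary surjectivity requirement on the differential of $(X,Y)\mapsto \log(e^X e^Y e^{-X} e^{-Y})$ at $(X_0,Y_0)$, which is not automatic for every choice of $X_0,Y_0$ solving $[X_0,Y_0]=A$ and would need its own argument. Dawson--Nielsen sidestep this by working with an \emph{approximate} commutator: they take $V = e^X$, $W = e^Y$ with $[X,Y]$ matching the log of $\Delta$ and simply carry the resulting $O(\delta^{3/2})$ BCH error into the next round, which is of the same order as the net-discretization error anyway and therefore harmless. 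Your stability bound already absorbs an error of that size, so you could drop the exactness claim entirely without changing any downstream estimate; doing so makes the shrinking lemma both simpler and unconditionally true. With that simplification your proof is essentially the textbook one.
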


Therefore if one wishes to change the gate set used for a $\mathsf{BQP}$ computation (which requires compiling each gate to $1/\poly$ accuracy), a change of gate set only incurs polylogarithmic overhead in the input size $n$. In particular this implies the runtimes of quantum algorithms based on inverse-closed gate sets are well-defined up to polylog factors in $n$; an $O(n^c)$ algorithm using one particular universal gate set implies an $\tilde{O}(n^c)$ algorithm using any other (inversion-closed) universal gate set.
It also implies that the choice of a particular universal gate set is unimportant for quantum computation; changing between gate sets incurs low overhead.

Given the central importance of the Solovay-Kitaev theorem to quantum computing, prior works have improved the theorem in various directions. For instance, a number of works (see, e.g., \cite{kliuchnikov2013synthesis,paetznick2014repeat,selinger2015efficient,msr2015fallback,bocharov2015efficient,sarnak2015letter,ross2014optimal,kliuchnikov2016practical,parzanchevski2017super}) have decreased the overheads of the Solovay-Kitaev theorem for particular inverse-closed gate sets by improving the exponent in the logarithm from 3.97 to 1 (which is optimal) or even by improving the hidden constants in front of the logarithm.
Such works are important steps towards making compilation algorithms practically efficient. Additionally prior work has shown a version of the Solovay-Kitaev algorithm for inverse-closed non-unitary matrices \cite{aharonov2007polynomial} and as well more general Lie groups \cite{kuperberg2009hard}. Note that there is also an information-theoretic non-algorithmic version of the Solovay-Kitaev theorem with exponent 1 for generic inverse-closed gate sets \cite{harrowrechtchuang}. This has subsequently been extended also to inverse-free gate sets \cite{varju2013random}.

In this work, rather than improving the overheads of the Solovay-Kitaev theorem, we work towards removing the assumption that the gate set contains inverses of all its gates. This is important for several reasons. First, on a theoretical level it would be surprising if the power of noiseless quantum computers could be gate set dependent.
Of course, in the real world one could apply fault-tolerance \cite{aharonov1997fault} to allow the use of approximate inverses in place of exact inverses, but it seems strange to have to resort to such a powerful technique to deal with a seemingly minor issue which is easily stated in a noiseless setting.
Furthermore, this would not answer the original mathematical question about how fast unitary gate sequences fill the space of all unitaries, since a fault-tolerant implementation corresponds to a completely positive rather than a unitary map (it implements the desired map on an encoded subspace of a larger-dimensional Hilbert space).

Second, an inverse-free Solovay-Kitaev theorem would be very helpful towards classifying the computational power of quantum gate sets.
It remains open\footnote{Even for the case of two-qubit gate sets~\cite{CLMO11,boulandCCC2016}!} to prove a classification theorem describing which gate sets $\G$ are capable of universal quantum computing, which are efficiently classically simulable, and which can solve difficult sampling problems like \textsc{BosonSampling} or \textsf{IQP} \cite{aaronsonboson,BJS2010}.
A number of recent works have made partial progress on this problem \cite{boulandCCC2016,cccs}.
However, a common bottleneck in these proofs is that they need to invoke the Solovay-Kitaev theorem on various ``postselection gadgets'' to argue that one can perform hard sampling problems, and the set of these gadgets is not necessarily closed under inversion.
In the above works this problem is tackled on an ad-hoc, case specific basis.
An inverse-free Solovay-Kitaev theorem would simplify these proofs and expand the frontier for gate classification.

Finally, such theorem would enable further progress in quantum Hamiltonian complexity where universal gate sets are used to encode computational instructions by the interaction terms of local Hamiltonians. The ground states of such Hamiltonians have very complicated structure and computing their ground energy is typically $\mathsf{QMA_{EXP}}$-complete \cite{GI13}, a phenomenon that can occur even when the local dimension of each individual subsystem is relatively small \cite{BCO16,BP17}. Since low local dimension is physically more relevant, it is desirable to minimize the dimensions of these constructions even further. A significant roadblock in this is the size of the universal gate set used to encode the computation. Since each gate contributes additional dimensions, one would like to have as few gates as possible. Considering how intricate and hard to optimize the known constructions \cite{BCO16,BP17} are, getting rid of inverses would be an easy way forward.

For the reasons outlined above, we believe this longstanding open problem (noted in \cite{DawsonSolovayKitaev,kuperberg2009hard}) is an important one to resolve.
In this work, we make partial progress towards this goal by replacing the inverse-closedness assumption with the requirement that the gate set contains any (projective) irreducible representation (a.k.a.\ \emph{irrep}) of a finite group.
Roughly speaking, a \emph{projective irrep} is a set of unitary matrices that form a group (up to a global phase) and that do not leave any non-trivial subspace invariant. A canonical example is the set of \emph{Pauli matrices} $\set{I,X,Y,Z}$.

\begin{restatable}[Solovay-Kitaev theorem with an irrep instead of inverses]{theorem}{ThmMain}\label{thm:ours}
For any fixed $d\geq2$, suppose $\G \subset \SU{d}$ is a finite gate set which densely generates $\SU{d}$, and furthermore $\G$ contains a (projective) irrep of some finite group $G$. Then there is an algorithm which outputs an $\epsilon$-approximation to any $U \in \SU{d}$ using merely $O(\polylog(1/\epsilon))$ elements from $\G$.
\end{restatable}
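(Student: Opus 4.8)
The key idea is to use the group-averaging trick to manufacture inverses "for free" from the irrep. Recall that the standard Solovay-Kitaev proof needs inverses only to form the \emph{group commutator} $gh g\ct h\ct$, which approximates the Lie-algebra bracket to second order and thus lets one shrink errors quadratically at each level of recursion. So the plan is: (i) show that, using the irrep of $G$, we can efficiently approximate $g\ct$ for every short word $g$ over $\G$, up to a controlled blow-up in word length; (ii) feed this into the usual Solovay-Kitaev recursion essentially unchanged. For step (i), the crucial observation is that if $\rho\colon G \to \SU d$ is an irrep contained in $\G$, then for any fixed $U \in \SU d$ the twirl $\frac{1}{|G|}\sum_{k\in G} \rho(k)\, U\, \rho(k)\ct$ is a scalar (by Schur's lemma), namely $\frac{\Tr U}{d} I$. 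More usefully, conjugating by the irrep elements lets us move $U$ around, and a careful telescoping of such conjugates — combined with the fact that the irrep is a \emph{finite} group, so $\rho(k)\ct = \rho(k)^{|k|-1}$ is itself a positive word in $\G$ — yields a positive word approximating $U\ct$ whose length is only a constant factor (depending on $d$ and $|G|$, both fixed) times the length of the word for $U$.

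More concretely, I expect the argument to run as follows. First, since $\rho(k)$ has finite order for each $k\in G$, we get $\rho(k)\ct$ exactly as a positive word of bounded length; so the subgroup generated by the irrep is inverse-closed using only positive words. Next, for a general $g\in\SU d$ one writes $g\ct$ in terms of $g$ and the irrep: the point is that $g$ together with $\set{\rho(k)}$ generates a dense subgroup, but we need a \emph{quantitative} statement, so one should instead find a fixed finite set of group-theoretic identities — e.g., expressing $g\ct$ via $g$ conjugated by irrep elements and multiplied together — that hold \emph{approximately} with error controlled by how far $g$ is from the identity, or alternatively bootstrap: use an initial crude net (obtained by brute-force enumeration of words, as in the base case of ordinary Solovay-Kitaev) to get approximate inverses of all gates to constant accuracy, then show this constant-accuracy inverse-closedness is enough to run the recursion, since each recursive step only needs to invert group commutators of elements already close to $I$. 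In fact the cleanest route is probably: prove that $\overline{\G} := \G \cup \set{\text{all length-}\ell\text{ words}}$ for a suitable constant $\ell$ contains an $\epsilon_0$-approximation to $g\ct$ for every $g\in\G$, with $\epsilon_0$ a small enough fixed constant; the irrep is what guarantees such $\ell$ exists with the right length bound, because conjugation-symmetry of the dense subgroup under the finite group $G$ forces the word metric to be "inverse-quasi-symmetric."

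The main obstacle, I anticipate, is exactly this quantitative inverse-approximation step: turning the qualitative fact "$\langle \G\rangle$ is dense and $G$-conjugation-invariant" into a \emph{polylogarithmic-overhead} construction of approximate inverses. One has to ensure that approximating $g\ct$ costs only $O(\poly\log(1/\delta))$ gates to accuracy $\delta$ — otherwise the blow-up compounds through the $O(\log\log(1/\epsilon))$ levels of recursion and destroys the polylog bound. I would handle this by setting up a second, nested Solovay-Kitaev-style recursion purely for the task "given a word for $g$, produce a word for an approximation to $g\ct$," using the irrep's finite-order elements (which are exact, positive-word inverses) to seed the commutator construction at every level; one then checks that the two recursions can be interleaved without the lengths exploding. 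A secondary technicality is the projective case and the extension to $\SL d$ (the special linear group), where Schur's lemma still applies to the irrep but unitarity is lost; there one works with the operator norm and the Iwasawa/polar decomposition as in \cite{aharonov2007polynomial,kuperberg2009hard}, and the finite-order property of the (necessarily unitarizable) irrep is again what supplies the exact inverses.
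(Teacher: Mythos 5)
Your high-level instincts are right---reduce to approximating the finitely many $U_i\ct$ that appear after running ordinary Solovay-Kitaev, exploit the finite order of $\rho(k)$ to get exact positive-word inverses for the irrep elements, and use Schur's lemma for the twirl $\sum_{k\in G}\rho(k)U\rho(k)\ct$---but the concrete mechanism you propose for manufacturing approximate inverses is not the one that works, and as sketched it has a genuine gap. You want to ``seed the commutator construction'' with the irrep's exact inverses and run a nested SK-style recursion for inversion. The problem is that a group commutator $VWV^{-1}W^{-1}$ achieves its quadratic error cancellation only when $V^{-1}$ and $W^{-1}$ are exact (or accurate to higher order than the commutator itself); if you substitute a $\delta$-approximate inverse, you pick up an additive $O(\delta)$ error, so to reach accuracy $\epsilon^2$ at one level of recursion you would already need the inverse to accuracy $\ll\epsilon^2$, i.e.\ the task is circular. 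Your ``constant-accuracy inverse-closedness suffices'' bootstrap route runs straight into this wall, and the ``nested recursion interleaved without length explosion'' is exactly the step that is not established. The paper sidesteps commutators entirely.

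What the paper actually does, and what your proposal is missing, is to replace the commutator by the \emph{product} (not the average)
\begin{equation*}
  f(W) \;:=\; \prod_{g\in G} \rho(g)\, W\, \rho(g)\ct,
\end{equation*}
applied to $W = VU_i$ where $V$ is an initial $\epsilon_0$-approximation of $U_i\ct$ drawn from a fixed brute-force net. Writing $W = I + \Op$ and expanding the product, the first-order term is $\sum_{g}\rho(g)\Op\rho(g)\ct = \abs{G}\tfrac{\Tr\Op}{d}I$ by Schur orthogonality (this is where your twirl observation enters, but inside the expansion of a product, not as a stand-alone map), and $\abs{\Tr\Op}=O(\norm{\Op}^2)$ because $\det(VU_i)=1$ forces the Lie algebra to be traceless. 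So $\norm{f(W)-I}=O(\norm{W-I}^2)$ with \emph{no} inverses of the $U_i$ required; only $\rho(g)\ct$ appear, and these are exact positive words. Crucially, with the identity element placed last in the product, $f(VU_i)U_i\ct$ has its trailing $U_iU_i\ct$ cancel, so the whole expression is a positive word over $\G$. Iterating $f$ then gives doubly-exponential error decay with exponential length growth, hence the $\polylog$ bound. Until you specify this product operator, the order-of-product trick that cancels the final $U_i\ct$, and the two cancellation claims (traceless part by Schur, trace part by $\det=1$), the proposal does not yet constitute a proof.
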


In other words, the inverses of some of the gates of $\G$---namely those which constitute an irrep of $G$---are also in $\G$, but the inverses of the remaining gates may not be in $\G$.
So we are trading inverses for some other structure in the gate set $\G$.
This extends recent work of Sardharwalla, Cubitt, Harrow and Linden~\cite{refocusing} which proved this theorem in the special case that $G$ is the Weyl (or generalized Pauli)\footnote{Note the Weyl operators only form a group up to global phase, but as we only require a \emph{projective irrep} they meet the criteria of our theorem.} group.
Sardharwalla \emph{et al.}'s result has already found application in gate set classification \cite{cccs}. We therefore expect that our result will likewise enable further progress on the gate set classification problem.
We also extend our theorem to the non-unitary case (see \Thm{oursSL} in \App{SLnC}), thus generalizing the (inverse-closed) non-unitary Solovay-Kitaev theorem of \cite{aharonov2007polynomial} (this may further extend to more general Lie groups as well following \cite{kuperberg2009hard}).
We expect that this version of the theorem will be particularly useful in gate classification as postselection gadgets are often non-unitary \cite{boulandCCC2016}.

\subsection{Proof techniques}
\label{sec:prooftechniques}

Our proof works in a similar manner to those in \cite{DawsonSolovayKitaev,refocusing}. The basic idea is to take an $\epsilon_0$-approximation $V$ of some gate $U$ and improve it to an $O(\epsilon_0^2)$-approximation of $U$, while taking the length of the approximation from $\ell_0$ to $c \ell_0$ for some constant $c$. Iterating this improvement step allows one to obtain a polylogarithmic overhead for compilation.\footnote{One can easily see the lengths of the gate sequences increase exponentially with each application of this operation, while the error decreases doubly exponentially, which implies the desired polylog dependence of the error.} The key in any proof of a Solovay-Kitaev theorem is to make use of $V$ in this construction in such a way that one does not incur $O(\epsilon_0)$ error in the resulting approximation, as one would naively have from the triangle inequality. In other words, one needs the error in the approximation of $U$ to cancel out to lowest order in $\epsilon_0$.

In the proof of the regular (inverse-closed) Solovay-Kitaev theorem, this is achieved using group commutators \cite{DawsonSolovayKitaev}, which manifestly require inverses in the gate set. Sardharwalla \emph{et al.}~\cite{refocusing} instead achieve this by applying a group averaging function over the Weyl group. They show by direct computation that the lowest order error term in $\epsilon$ cancels out (at least in a neighborhood of the identity).

In our proof, we also consider a group averaging function $f: \SU{d} \to \SU{d}$ based on some (projective) irrep $\rho: G \to \SU{d}$ of a finite group $G$:
\begin{equation}
  f(W) := \prod_{g \in G} \rho(g) W \rho(g)^\dagger.
\end{equation}
Our main technical contribution consists in showing that the lowest order error term cancels here, due to certain orthogonality relations obeyed by irreducible representations.
We show this follows from the fact that the multiplicity of the trivial irrep in the adjoint action of any irrep is one.
Therefore our proof both shows that efficient compilation can occur with a wider family of gate sets than was previously known, and also explains the mathematical reason that Sardharwalla \emph{et al.}'s proof works as it does.

\section{Proof of the main result}

To aid the understanding of our main result, let us first briefly define the relevant notions from representation theory (see \cite{Serre,Childs} for further details).

A $d$-dimensional \emph{representation} of a group $G$ is a map $\rho: G \to \U{d}$ such that $\rho(g_1) \rho(g_2) = \rho(g_1 g_2)$ for all $g_2,g_2 \in G$.
Similarly, $\rho$ is a \emph{projective representation} if it obeys this identity up to a global phase, i.e.\ $\rho(g_1) \rho(g_2) = e^{i \theta(g_1,g_2)} \rho(g_1 g_2)$ for some function $\theta: G \times G \to \R$.
A representation $\rho$ is \emph{reducible} if there is a unitary map $U \in \U{d}$ and two other representations $\rho_1$ and $\rho_2$ of $G$ such that $U \rho(g) U\ct = \rho_1(g) \oplus \rho_2(g)$ for all $g \in G$. If this is not the case, $\rho$ is called \emph{irreducible} (or \emph{irrep} for short).
Finally, if $A \subset B$ are two sets, we say that $A$ is \emph{dense} in $B$ if for any $\varepsilon > 0$ and any $b \in B$ there exists $a \in A$ such that $\norm{a-b} \leq \varepsilon$ for some suitable notion of distance $\norm{\cdot}$.

\ThmMain*

\begin{proof}
By assumption, our gate set is of the form
\begin{equation}
  \G := \rho(G) \cup \set{U_1, U_2, \dotsc, U_N}
\end{equation}
where $\rho(G) := \set{\rho(g) : g \in G}$ and $N \geq 0$ is some integer. Here
\begin{itemize}
  \item $\rho: G \rightarrow \SU{d}$ is a projective irreducible representation of some finite group $G$,
  \item $U_i \in \SU{d}$ are some additional elements whose inverses $U_i\ct$ are not necessarily in $\G$.
\end{itemize}
For the sake of simplicity, we will assume that $\rho$ is an actual irrep rather than a projective irrep (we describe how to generalize the proof to projective irreps in \App{proj}). Note that by the $\G \subset \SU{d}$ assumption we implicitly require that the representation $R$ is in $\SU{d}$ rather than in $\U{d}$. While many irreps are ruled out by this restriction, one can deal with such irreps by first converting them to projective irreps and then applying the techniques discussed in \App{proj}.
We divide the rest of the proof into several steps marked as below.

\partitle{Original gate sequence}
Given a gate $U \in \SU{d}$ which we wish to approximate to accuracy $\epsilon$, we first run the usual Solovay-Kitaev algorithm (see \Sec{Intro}) to obtain a sequence $S_{\epsilon/2}$ of gates whose product $\epsilon/2$-approximates $U$, using elements from $\G$ \emph{and} their inverses. This sequence contains both elements from the set $\rho(G)$ (which is closed under inversion), as well as gates $U_i$ and $U_i\ct$. All of these are in the gate set $\G$ except the $U_i\ct$---and there are only $O\of[\big]{\log^{3.97}(1/\epsilon)}$ many of these.
To prove \Thm{ours}, it therefore suffices to give a Solovay-Kitaev algorithm for approximating the $U_i\ct$ in terms of a sequence of $O(\polylog(1/\epsilon))$ gates from the set $\G$.

\label{footnote:d397}
More concretely, assume we show how to $\epsilon$-approximate each $U_i\ct$ using $O(\log^c(1/\epsilon))$ gates from $\G$ for some constant $c > 0$. Then we can set $\epsilon' := \frac{\epsilon}{2} / O\of[\big]{\log^{3.97}(1/\epsilon)}$ and run this algorithm to $\epsilon'$-approximate each $U_i\ct$ appearing in the sequence $S_{\epsilon/2}$ produced by the regular Solovay-Kitaev algorithm. If we substitute these approximations of $U_i\ct$ back into $S_{\epsilon/2}$, by the triangle inequality the existing error of $\epsilon/2$ in $S_{\epsilon/2}$ will be increased by another $\epsilon/2$ contributed jointly by all $U_i^\dagger$'s. These two contributions together give us the desired $\epsilon$-approximation of $U$.
Note that an $\epsilon'$-approximation of $U_i\ct$ requires $O\of[\big]{\log^{c}(1/\epsilon)}$ gates.\footnote{One can easily see that $\log^c \of[\big]{\frac{\log^{3.97}(1/\epsilon)}{\epsilon}} = O\of[\big]{\log^c(1/\epsilon)}$ as the additional $\log^{3.97}(1/\epsilon)$ factor only adds lower order $\log\log(1/\epsilon)$ terms.}
Hence the $\epsilon$-approximation to $U$ in total will use $O\of[\big]{\log^{c+3.97}(1/\epsilon)}$ gates from $\G$.

\partitle{Initial approximation of $U_i\ct$}
Since $\G$ generates a dense subgroup of $\SU{d}$, there exists a finite length $\ell_0$ such that length-$\ell_0$ sequences of elements of $\G$ are $\epsilon_0$-dense in $\SU{d}$, for a small fixed constant
\begin{equation}
  \epsilon_0 := \frac{1}{6|G|(d-1)! + 2|G|^2}.
  \label{eq:eps0}
\end{equation}
Let us pick among these sequences an initial $\epsilon_0$-approximation of $U_i\ct$ and denote it by $V$. Then
\begin{equation}
  \epsilon_0 \geq \norm{V - U_i\ct} = \norm{VU_i - I},
  \label{eq:VUi}
\end{equation}
where $\norm{\cdot}$ denotes the \emph{operator norm} which is unitarily invariant.

\partitle{Symmetrization}
Now consider the operator $f$ on $\SU{d}$ defined by
\begin{equation}
  f(W) := \displaystyle\prod_{g\in G} \rho(g) W \rho(g)\ct,
  \label{eq:f}
\end{equation}
where the order of the products is taken arbitrarily, as long as the last (rightmost) element of the product corresponds to the identity element $e \in G$.
We are interested in the action of $f$ on $VU_i$.
If we denote the difference in \Eq{VUi} by
$\Op := VU_i - I$
and distribute the product in \Eq{f} into several sums (with no $\Op$'s, with a single copy of $\Op$, two copies of $\Op$, etc.), we get
\begin{align}
  f(VU_i)
  &= \prod_{g \in G} \rho(g) (I+\Op) \rho(g)\ct \\
  &= I
   + \sum_{g \in G} \rho(g) \Op \rho(g)\ct
   + \sum_{\substack{g,g' \in G \\ g < g'}} \rho(g) \Op \rho(g)\ct \rho(g') \Op \rho(g')\ct \\
   &\qquad + \dotsb
   + \prod_{g \in G} \rho(g) \Op \rho(g)\ct,
   \label{eq:fexpanded}
\end{align}
where the order of terms in all products is inherited from \Eq{f} and $g < g'$ refers to this order. Note that the number of terms with $k$ copies of $\Op$ is $\binom{|G|}{k}$.

If one were to naively apply the triangle inequality to this sum, one would obtain that
\begin{equation}
  \norm{f(VU_i)-I} \leq |G|\norm{\Op} + \binom{|G|}{2}\norm{\Op}^2+\ldots
\end{equation}
In other words, one would get that we have moved $f(VU_i)$ further from the identity than we started.
To fix this, we will show that the first term of the above is actually much smaller---of order $\norm{\Op}^2$---and therefore our application of $f$ has moved us closer to the identity.
To see this, first note that using representation theory, one can show that the norm of the first-order term in \Eq{fexpanded} is
\begin{align}
  \norm*{\sum_{g \in G} \rho(g) \Op \rho(g)\ct}
  &= \norm*{\abs{G} \frac{\Tr \Op}{d} I} \\
  &= \abs{G} \frac{\abs{\Tr(V U_i - I)}}{d} . \label{eq:fexpandintermediate}
\end{align}
In other words, the traceless component of the first order term vanishes. This follows from certain orthogonality relations obeyed by irreps, and is proven in \Clm{avg} in \App{claims}.

Next, we show that the trace of $\Op = V U_i - I$ is small compared to its norm, namely
\begin{equation}
  \abs{\Tr(V U_i - I)} \leq (2^d + d!) \norm{V U_i - I}^2.
\end{equation}
This is proven in \Clm{smalltrace} in \App{claims}, and follows essentially because the Lie algebra of the special unitary group is traceless.
Plugging this in to \Eq{fexpandintermediate}, we see that
\begin{align}
  \norm*{\sum_{g \in G} \rho(g) \Op \rho(g)\ct}
  &\leq \abs{G} \frac{2^d + d!}{d} \norm{V U_i - I}^2 \\
  &\leq \abs{G} \frac{2^d + d!}{d} \epsilon_0^2
\end{align}
where we used \Clm{smalltrace} to get the first inequality and \Eq{VUi} to get the second.

Hence, by applying these results and then applying the triangle inequality to \Eq{fexpanded} we get
\begin{align}
  \norm{f(V U_i) - I}
  &\leq \abs{G} \frac{2^d + d!}{d} \epsilon_0^2
  + \sum_{k=2}^{\abs{G}} \epsilon_0^k \binom{\abs{G}}{k} \\
 &\leq \of*{
          \abs{G} \frac{2^d + d!}{d}
          + \frac{\abs{G}^2}{2}
          + \abs{G}^2 \sum_{k=1}^{\abs{G}-2} \epsilon_0^{k} \abs{G}^k} \epsilon_0^2 \label{eq:2ndterm}\\
  &\leq \of*{
          \abs{G} \frac{2^d + d!}{d}
          + \frac{\abs{G}^2}{2}
          + \frac{\abs{G}^2}{2}
        } \epsilon_0^2 \label{eq:geomsum}
\end{align}
Where in \Eq{2ndterm} we used the fact that $\binom{|G|}{2} \leq\frac{|G|^2}{2}$ and $\binom{|G|}{k}\leq |G|^k$, and in \Eq{geomsum} we used the fact that $\epsilon_0 < \frac{1}{2|G|^2}$, so since $|G|>2$ (as G has an irrep of dimension at least 2), we have that $\epsilon_0|G|\leq1/4$ so the geometric sum converges to a quantity $\leq\frac{1}{2}$.

Replacing this with a crude upper bound that $2^d \leq 2d!$ for $d>1$, we get that
\begin{equation}
\norm{f(V U_i) - I} \leq \of*{
          3\abs{G}(d-1)! + \abs{G}^2
        } \epsilon_0^2  =: \epsilon_1
\label{eq:eps1}
\end{equation}
Since we chose $\epsilon_0$ to be $\frac{1}{2(3\abs{G}(d-1)! + \abs{G}^2)}$ in \Eq{eps0}, $\epsilon_1 \leq \frac{\epsilon_0}{2}$ -- in other words $f(VU_i)$ is closer to the identity than $VU_i$.

Multiplying $f(VU_i) - I$ in \Eq{eps1} by $U_i\ct$ on the right, we have that $f(VU_i)U_i\ct$ is an $\epsilon_1$- approximation to $U_i\ct$.
We chose the identity to come last in the definition of $f$ in \Eq{f}, so the string of operators $f(VU_i)$ has the form
\begin{equation}
  f(VU_i)
  = \rho(g_1) VU_i \rho(g_1)\ct
    \rho(g_2) VU_i \rho(g_2)\ct \dotsm VU_i.
\end{equation}
Since $U_iU_i\ct$ cancels at the end, $f(VU_i)U_i\ct$ is an $\epsilon_1$-approximation to $U_i\ct$ using only terms from $\G$.

\partitle{Iterative refinement}
To complete the proof, we iterate this construction by considering
\begin{equation}
  f^{(k)}(VU_i):=f(f(\dotsb f(VU_i))).
\end{equation}
Note from \Eq{eps1} that $f^{(k)}(VU_i) U_i\ct$ is an $\epsilon_k$-approximation to $U_i\ct$, where $\epsilon_k \leq (3\abs{G}(d-1)! + \abs{G}^2)\epsilon_{k-1}^{2}$.
The length of the sequence $f^{(k)}$, denoted $\ell_k$, obeys $\ell_k = |G|\ell_{k-1}+2|G|$.
Again $f^{(k)}(VU_i) U_i\ct$ can be expressed only in terms of elements of $\G$ (since the last $U_i$ in the expansion of $f^{(k)}(VU_i)$ cancels with the rightmost $U_i\ct$ as before).
One can easily show that these recurrence relations imply that as $k$ grows:
\begin{itemize}
  \item the approximation error $\epsilon_k$ shrinks doubly exponentially: $\epsilon_k \leq \frac{2\epsilon_0}{2^{2^k}}$;
  \item the length of the sequence $\ell_k$ grows exponentially: $\ell_k = O(|G|^k \ell_0)$.
\end{itemize}
Note that this sort of asymptotic behavior occurs simply because $\epsilon_k = O(\epsilon_{k-1}^2)$ while $\ell_k = O(\ell_{k-1})$ (though of course the value of $\epsilon_0$ used in the recurrence may depend on the hidden constant in the big-O notation).
This immediately implies that one can approximate $U_i\ct$ to accuracy $\epsilon$ with merely polylog overhead, as desired.
More specifically, such approximation uses
\begin{equation}
  O\of[\big]{\ell_0 \log^{\log_2|G|}(1/\epsilon)}
  \label{eq:scaling}
\end{equation}
elements of $\G$. By our analysis at the beginning of the proof, this gives a Solovay-Kitaev theorem with an exponent of $\log_2|G|+3.97$ in the polylog, completing the proof of \Thm{ours}.
\end{proof}

We have therefore shown that one can $\epsilon$-approximate any $U_i^\dagger$ using only gates from our gate set $\G$ using merely $\polylog(1/\epsilon)$ gates.
The exponent of the polylog for approximating each $U_i^\dagger$ is again easily computed to be $O(\log_2 |G|)$. So putting this all together, our approximation for the overall unitary $U$ requires
\begin{equation}
  O\of[\big]{\log^{\log_2 |G|}(1/\epsilon)}
\end{equation}
gates from $\G$.
Note that the dependence on dimension $d$ and order of the group $G$ is hidden in the big-O notation, which hides a factor of $\ell_0$, the length of sequences required to achieve an initial $\epsilon_0$-net of $\SU{d}$. By a volume argument $\ell_0= \Omega(d^2)$ \cite{DawsonSolovayKitaev}.
In fact our choice of $\epsilon_0$  implies that $\ell_0=\Omega( d^3\log d)$ in our construction.\footnote{Since an $\epsilon_0$-ball occupies $\Theta(\epsilon_0^{d^2})$ volume in $\SU{d}$, $\ell_0 = \Omega\of[\big]{d^2 \log (1/\epsilon_0)}$ \cite{DawsonSolovayKitaev}.
Since we set $\epsilon_0 = (2|G|^2+6|G|(d-1)!)^{-1}$ in \Eq{eps0}, we have that $\ell_0 = \Omega\of[\big]{d^3\log d}$ since $\log d!$ scales as $O(d\log d)$ by Stirling's formula.}

\subsection{Extensions of our theorem}

We have shown a Solovay-Kitaev theorem for any gate set $\G$ that contains an irrep of a finite group $G$, without requiring $\G$ to be inverse-closed. Our result can be easily generalized in two directions.

First, our proof also works if instead of an irrep we have a \emph{projective} irrep. That is, a map $\rho: G \to \SU{d}$ such that, for any $g_1, g_2 \in G$,
\begin{equation}
  \rho(g_1) \rho(g_2) = e^{i\theta(g_1,g_2)} \rho(g_1 g_2)
\end{equation}
for some collection of phases\footnote{The quantity $e^{i\theta(g_1,g_2)}$ is also known as a \emph{Schur multiplier} of $G$.} $\theta(g_1,g_2) \in [0,2\pi)$. In such case one still has a Solovay-Kitaev theorem for any universal gate set that includes $\rho(G)$.
For instance, the Pauli matrices $\{I,X,Y,Z\}$ form a projective irrep, but not an irrep (though the matrices $\{\pm1,\pm i\}\cdot\{I,X,Y,Z\}$ do form an irrep).
Since the exponent of the logarithm of our version of the Solovay-Kitaev theorem contains $\log_2 |G|$, this generalization improves the exponent
(e.g.\ using the four Pauli matrices instead of the eight-element Pauli group improves the exponent by an additive 2).
We give details on why projective irreps suffice in \App{proj}.

Second, we note that our proof can be extended to the \emph{special linear group} $\SL{d}$ as well. That is, one can also efficiently compile non-singular matrices, so long as a (projective) irrep is present in a gate set that is universal for $\SL{d}$.
A Solovay-Kitaev Theorem (with inverses) for the special linear group was first shown by Aharnov, Arad, Eban and Landau \cite{aharonov2007polynomial}, who used it to prove that additive approximations to the Tutte polynomial are $\textsf{BQP}$-hard in many regimes.
It was also applied by \cite{boulandCCC2016} to the problem of classifying quantum gate sets, where it arose naturally because the ``postselection gadgets'' used in their proof are non-unitary.
For a formal description of the non-unitary version of this theorem, please see \App{SLnC}.
Since postselection gadgets are often non-unitary \cite{boulandCCC2016}, we likewise expect this version of the theorem will be more useful for gate classification problems.

\section{Open problems} \label{sec:Problems}

The main unresolved problem left by our work is to prove a generic inverse-free Solovay-Kitaev theorem, which has been a longstanding open problem \cite{DawsonSolovayKitaev,kuperberg2009hard}.

\begin{conjecture*}[Inverse-free Solovay-Kitaev theorem] \label{invfreesk} For any fixed $d\geq2$, if $\G \subset \SU{d}$ is a finite gate set which densely generates $\SU{d}$, then there is an algorithm which outputs an $\epsilon$-approximation to any $U \in \SU{d}$ using merely $O(\polylog(1/\epsilon))$ elements from $\G$.
\end{conjecture*}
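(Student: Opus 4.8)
\partitle{A possible approach}
By the same reduction used at the start of the proof of \Thm{ours} (which nowhere uses the irrep): run the ordinary Solovay--Kitaev algorithm over $\G\cup\G\ct$ to obtain a polylog-length word for $U$, and observe that the only factors appearing in it that need not lie in $\G$ are the inverses $U'\ct$ of the finitely many gates $U'\in\G$. So it suffices to produce, for each $U'\in\G$, a forward word in $\G$ of length $O(\polylog(1/\epsilon))$ that $\epsilon$-approximates $U'\ct$; the whole problem is to compile inverses out of forward products.

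The natural plan is to keep the symmetrization strategy of this paper but to \emph{manufacture} the irrep rather than assume it. Fix once and for all a finite subgroup $G\le\SU{d}$ acting irreducibly on $\C^d$ --- such a group exists in every dimension, e.g.\ the $d$-dimensional Heisenberg--Weyl (generalized Pauli) group, which is the case treated in \cite{refocusing}, read as a projective irrep via \App{proj} if necessary. Since $\G$ densely generates $\SU{d}$, for any $\delta>0$ there are forward words $\tilde\rho(g)$ with $\norm{\tilde\rho(g)-\rho(g)}\le\delta$ for all $g\in G$, and with $\tilde\rho$ of the identity taken to be the empty word. Each $\tilde\rho(g)$ is an honest product of unitaries, hence exactly unitary, so $\tilde f(W):=\prod_{g\in G}\tilde\rho(g)\,W\,\tilde\rho(g)\ct$ still fixes $I$ exactly. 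Writing $\Op:=VU'-I$ with $\norm{\Op}\le\epsilon_0$ and expanding $\tilde f(VU')$ as in \Eq{fexpanded}, its first-order term is $\sum_{g}\tilde\rho(g)\Op\tilde\rho(g)\ct=\sum_{g}\rho(g)\Op\rho(g)\ct+O(\abs{G}\,\delta\,\norm{\Op})$, and the first of these sums is $O(\abs{G}\,\norm{\Op}^2)$ by \Clm{avg} and \Clm{smalltrace}; so if $\delta\le\epsilon_0$ then $\tilde f(VU')\,U'\ct$ is an $O(\epsilon_0^2)$-approximation of $U'\ct$ by a forward word, and one iterates $\tilde f$ exactly as in the proof of \Thm{ours}.

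The main obstacle is a circular dependence between the two accuracies: to square the error at the $k$-th refinement step one must feed in group approximations with $\delta\sim\epsilon_{k-1}$, i.e.\ compile every element of $G$ to roughly the current target accuracy using only forward words --- but compiling a fixed target such as $\rho(g)$ to accuracy $\delta$ with $\polylog(1/\delta)$ forward gates is precisely the content of the inverse-free theorem we are trying to prove. If one uses only a covering/density bound instead, a $\delta$-approximation of $G$ needs a forward word whose length is not bounded a~priori by any fixed function of $1/\delta$; it can grow super-polynomially for Liouville-type generating sets, which is exactly the regime in which the inverse-closed argument never needs to re-escape the group whereas an inverse-free one does, and unrolling this over the $\Theta(\log\log(1/\epsilon))$ levels of doubly-exponential error decay yields total length $\poly(1/\epsilon)$ rather than $\polylog(1/\epsilon)$. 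Breaking this circularity --- via a mutual, bootstrapped recursion that refines the target approximation and the irrep approximation in tandem, or by exhibiting some finite structure that is guaranteed to sit inside \emph{every} dense gate set and that genuinely plays the role of inverses --- seems to be the heart of the matter; the fact that a non-constructive inverse-free theorem is already known for \emph{generic} gate sets \cite{varju2013random} suggests that the residual difficulty is the combination of constructivity with adversarially chosen (non-Diophantine) gate sets.

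A different line of attack avoids irreps entirely by working directly with the Baker--Campbell--Hausdorff expansion: choose forward-reachable near-identity rotations $e^{A_1},\dotsc,e^{A_m}$ whose generators satisfy $\sum_i A_i\approx 0$ while the pairwise brackets $\sum_{i<j}[A_i,A_j]$ realize the desired small correction, so that $e^{A_1}\dotsm e^{A_m}$ cancels to first order and refines an approximation quadratically. Making these cancellations constructive and uniform over all gate sets runs into the same worst-case obstruction, so my expectation is that, whichever route one takes, the crux is handling Liouville-type gate sets without recourse to the commutator identity.
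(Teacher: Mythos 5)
The statement you are addressing is labelled a conjecture in the paper, and the paper offers no proof of it: the authors explicitly call it ``the main unresolved problem left by our work'' and a ``longstanding open problem.'' So there is no proof of the paper's to compare your attempt against. What you have written is, appropriately, not a proof but an exploration of an approach followed by an honest diagnosis of why it fails, and that diagnosis is sound and consistent with the paper's own brief discussion in \Sec{Problems}.

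Your central observation is correct and is exactly the crux: if one tries to manufacture the irrep $\rho(G)$ by forward words $\tilde\rho(g)$, the accuracy $\delta$ to which one must approximate $\rho(g)$ at refinement level $k$ is comparable to the current error $\epsilon_{k-1}$, and producing those approximations by forward words is precisely an instance of the inverse-free problem one is trying to solve. Falling back on the unconditional $O(1/\delta)$ compilation (inverse-closed Solovay--Kitaev followed by the $W^k$ replacement of each inverse, as the paper notes in \Sec{Problems}) and unrolling over the $\Theta(\log\log(1/\epsilon))$ doubling levels gives $\poly(1/\epsilon)$ total length, not $\polylog(1/\epsilon)$ --- again exactly what the paper reports as the current state of affairs. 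One small imprecision: your remark that a $\delta$-approximation by forward words ``is not bounded a~priori by any fixed function of $1/\delta$'' and can ``grow super-polynomially for Liouville-type generating sets'' overstates the difficulty. The $W^k$ trick (via pigeonhole / simultaneous Diophantine approximation) gives a $\poly(1/\delta)$ bound for \emph{every} dense gate set, Liouville or not; the obstruction is not that the bound is unbounded but precisely that it is polynomial rather than polylogarithmic, and this does not improve under iteration. Your BCH remark and the pointer to \cite{varju2013random} as evidence that the difficulty is constructivity rather than existence are both in line with the paper's perspective. In short: you have not proved the conjecture, but neither does the paper, and you have correctly identified both the natural bootstrapping strategy and the reason it is circular.
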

One can easily see that for any universal gate set (possibly without inverses), one can $\epsilon$-approximate arbitrary unitaries with $O(1/\epsilon)$ overhead. This follows from simply running the Solovay-Kitaev theorem with inverses, and then approximating each inverse $W^\dagger$ with $W^k$ for some integer $k$ (which one can do with $O(1/\epsilon)$ overhead as this is simply composing irrational rotations about a single axis).
However current approaches seem to be unable to improve this compilation algorithm from $O(1/\epsilon)$ to $\polylog(1/\epsilon)$.
As discussed in \Sec{prooftechniques}, current proofs of the Solovay-Kitaev theorem require a special cancellation of error terms in order to convert an $\epsilon$-approximation of some operator into an $O(\epsilon^2)$-approximation. This cancellation of error terms can be achieved by taking group commutators \cite{DawsonSolovayKitaev} or, as in this work and \cite{refocusing}, it can be achieved by averaging over irreps and using the orthogonality of irreps.
However, there is no known technique for achieving this sort of error cancellation without having some structure in the gate set.\footnote{For example, Zhiyenbayev, Akulin, and Mandilara \cite{katerina2017diffuse} have recently studied an alternative setting where instead of inverses a certain ``isotropic'' property of the gates is assumed.}

Additionally, a natural question is whether the value of $\epsilon_0$ can be improved. This would improve the scaling of our result with dimension.
In our result (and in the inverse-closed Solovay-Kitaev Theorem) the big-$O$ notation hides a factor of $\ell_0$---the length of the initial sequences required to achieve an $\epsilon_0$-net.
In our result $\epsilon_0$ scales as $1/d!$, and hence a volume argument implies $\ell_0=\Omega(d^3\log d)$.
In contrast the (inverse-closed) Solovay-Kitaev theorem merely requires $\epsilon_0=\Theta(1)$ resulting in $\ell_0=\Omega(d^2$)~\cite{DawsonSolovayKitaev}.
It is a natural question if one can improve the value of $\epsilon_0$ and therefore improve dimension dependence of our construction.

A somewhat simpler open problem is whether our theorem can be improved by considering particular orders of the group elements in \Eq{f}. The function $f(U)$ which we iterate when proving \Thm{ours} is defined by averaging over the irrep of $G$ in an arbitrary order; our theorem essentially works because if $U$ is $\epsilon$-close to the identity then $f(U)$ is $O(\epsilon^2)$-close to the identity. However, we have found by direct calculation that for the 2-dimensional irrep of $S_3$, considering particular orders of the group can lead to the $O(\epsilon^2)$ terms cancelling out as well, leaving only  $O(\epsilon^3)$ terms. It is an interesting open problem if these additional cancellations can be generalized to other groups. If so, they would improve the $\log_2 |G|$ in the exponent of the logarithm of our result to $\log_k |G|$, where $k$ is the lowest order remaining error term.

Finally, we note one may be able to extend our results to compilation over more general Lie groups, just as Kuperberg extended the inverse-closed Solovay-Kitaev theorem to arbitrary connected Lie groups whose Lie algebra is perfect \cite{kuperberg2009hard}. We leave this as an open problem.

\section*{Acknowledgments}
We thank Aram Harrow, Vadym Kliuchnikov, and Zolt\'{a}n Zimbor\'{a}s for helpful discussions, and Adam Sawicki for pointing us to reference \cite{varju2013random}.
AB was partially supported by the NSF GRFP under Grant No. 1122374, by the Vannevar Bush Fellowship from the US Department of Defense, by ARO Grant W911NF-12-1-0541, by NSF Grant CCF-1410022, and by the NSF Waterman award under grant number 1249349.
Part of this work was done while MO was at the University of Cambridge where he was supported by a Leverhulme Trust Early Career Fellowship (ECF-2015-256). MO also acknowledges hospitality of MIT where this project was initiated.


\appendix

\section{Auxiliary claims} \label{app:claims}

\begin{claim}\label{claim:avg}
If $\rho$ is a $d$-dimensional (projective) irrep of some finite group $G$ and $M$ is any $d \times d$ complex matrix then
\begin{equation}
  \sum_{g \in G} \rho(g) M \rho(g)\ct = \abs{G} \frac{\Tr M}{d} I.
\end{equation}
\end{claim}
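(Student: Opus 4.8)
The plan is to use Schur's lemma. The key observation is that the map $\Phi: M \mapsto \sum_{g \in G} \rho(g) M \rho(g)\ct$ is a linear operator on the space of $d \times d$ complex matrices, and I would first show that it \emph{intertwines} $\rho$ with itself. Concretely, for any $h \in G$ one computes $\rho(h) \Phi(M) \rho(h)\ct = \sum_{g \in G} \rho(h)\rho(g) M \rho(g)\ct\rho(h)\ct$. In the genuine-representation case $\rho(h)\rho(g) = \rho(hg)$, so after reindexing the sum over $g \mapsto h^{-1}g$ we get back $\sum_{g} \rho(g) M \rho(g)\ct = \Phi(M)$; hence $\Phi(M)$ commutes with every $\rho(h)$. (In the projective case the phases $e^{i\theta(h,g)}$ appear on both sides as $\rho(h)\rho(g)\cdots\rho(g)\ct\rho(h)\ct$ and cancel, since each $\rho(g)\ct$ picks up the conjugate phase, so the same conclusion holds.)

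Next, since $\rho$ is irreducible, Schur's lemma says that any matrix commuting with all $\rho(h)$ is a scalar multiple of the identity. Therefore $\Phi(M) = \lambda(M) I$ for some scalar $\lambda(M)$ depending linearly on $M$. To pin down $\lambda(M)$ I would take the trace of both sides: $\Tr \Phi(M) = \sum_{g \in G} \Tr\of[\big]{\rho(g) M \rho(g)\ct} = \sum_{g \in G} \Tr M = \abs{G}\,\Tr M$ by cyclicity of the trace and unitarity of $\rho(g)$, while $\Tr\of[\big]{\lambda(M) I} = \lambda(M)\, d$. Equating gives $\lambda(M) = \abs{G}\,\Tr M / d$, which is exactly the claimed identity.

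I do not anticipate a serious obstacle here; the only point requiring a little care is the projective case, where one must check that the Schur-multiplier phases genuinely cancel in the intertwining computation (they do, because $\Phi$ is built from the combination $\rho(g)\, \cdot\, \rho(g)\ct$ rather than from $\rho(g)$ alone, and conjugation by a phase is trivial). One should also note that Schur's lemma in the form "commutant of an irrep is $\C I$" applies equally to projective irreps, since a projective irrep of $G$ is an honest irrep of a suitable central extension of $G$ and irreducibility is unaffected. With that remark in place the argument is complete.
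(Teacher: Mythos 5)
Your proposal is correct. The paper's own proof works through the Schur orthogonality relations for matrix coefficients of irreps: it takes the entrywise identity $\frac{d}{|G|}\sum_{g}\rho(g)_{ij}\overline{\rho(g)_{kl}}=\delta_{ik}\delta_{jl}$, rewrites it in bra-ket form, sums against $|i\rangle\langle k|$, and extends by linearity. You instead argue directly from Schur's lemma: show the averaging map $\Phi(M)=\sum_g\rho(g)M\rho(g)^{\dagger}$ intertwines $\rho$ with itself (so $\Phi(M)$ lies in the commutant of an irrep, hence is scalar), and then pin down the scalar by a trace computation. These routes are logically close---the orthogonality relations are themselves deduced from Schur's lemma---but yours is more conceptual and avoids the index manipulations. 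The paper in fact gestures at an argument in your spirit in the remark immediately following its proof, observing that the adjoint action of an irrep contains the trivial representation with multiplicity exactly one; your $\Phi$ is (up to normalization) the projector onto that trivial component. Your treatment of the projective case is also correct and matches the paper's Appendix B: the Schur-multiplier phases cancel in the sandwich $\rho(g)\,\cdot\,\rho(g)^{\dagger}$, and Schur's lemma for projective irreps follows by passing to a central extension. One small caveat worth noting: your step $\Tr\bigl(\rho(g)M\rho(g)^{\dagger}\bigr)=\Tr M$ uses unitarity of $\rho(g)$, which is fine here since $\rho$ maps into $\SU{d}$; for the non-unitary extension in Appendix C one would write $\rho(g)^{-1}$ in place of $\rho(g)^{\dagger}$ throughout, and the same trace argument then goes through by cyclicity alone.
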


\newcommand{\ket}[1]{|#1\rangle}
\newcommand{\bra}[1]{\langle#1|}

\begin{proof}
If $\rho$ and $\rho'$ are any two irreps of a finite group $G$, with dimensions $d_\rho$ and $d_{\rho'}$ respectively, their matrix entries obey the following orthogonality relations~\cite{Serre}:
\begin{equation}
  \frac{d_\rho}{\abs{G}} \sum_{g \in G}
  \rho(g)_{ij} \, \overline{\rho'(g)_{kl}}
  = \delta_{\rho\rho'} \delta_{ik} \delta_{jl}, \qquad
  \forall i,j \in \set{1, \dotsc, d_\rho}, \quad
  \forall k,l \in \set{1, \dotsc, d_{\rho'}}.
  \label{eq:orth1}
\end{equation}
In particular, if $\rho = \rho'$ and we write the matrix entries as $\rho(g)_{ij} = \bra{i} \rho(g) \ket{j}$ then
\begin{equation}
  \frac{d}{\abs{G}} \sum_{g \in G}
  \bra{i} \rho(g) \ket{j} \bra{l} \rho(g)\ct \ket{k}
  = \delta_{ik} \delta_{jl}, \qquad
  \forall i,j,k,l \in \set{1, \dotsc, d}
  \label{eq:orth2}
\end{equation}
where $d := d_\rho = d_{\rho'}$.
If we multiply both sides by $\ket{i}\bra{k}$ and then sum over $i$ and $k$, we get
\begin{equation}
  \frac{d}{\abs{G}} \sum_{g \in G}
  \rho(g) \ket{j} \bra{l} \rho(g)\ct
  = I \delta_{jl}, \qquad
  \forall j,l \in \set{1, \dotsc, d}.
  \label{eq:jl}
\end{equation}
If $M = \sum_{j,l=1}^d m_{jl} \ket{j} \bra{l}$ then by linearity,
\begin{equation}
  \frac{d}{\abs{G}}
  \sum_{g \in G} \rho(g) M \rho(g)\ct
  = I \sum_{j,l=1}^d m_{jl} \delta_{jl}
  = I \Tr M,
\end{equation}
which completes the proof.
\end{proof}

Another way to see this result is by noticing that the adjoint action of $\rho$ decomposes as a direct sum of the trivial representation (acting on the $1$-dimensional space spanned by the identity matrix) and a $(d^2-1)$-dimensional representation without any trivial component.
This follows from Schur's first lemma.
The result then follows by the orthogonality relations obeyed by the irrep decomposition of the adjoint action.

\begin{claim}\label{claim:smalltrace}
If $M \in \SL{d}$ then $\abs{\Tr M - d} \leq (2^d + d!) \norm{M - I}^2$.
\end{claim}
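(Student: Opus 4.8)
The plan is to diagonalize $M$ and exploit the fact that the determinant constraint $\det M = 1$ forces the trace to be \emph{quadratically} small near the identity. Write $A := M - I$ and $\eta := \norm{M-I}$, so that $\Tr M - d = \Tr A$, and let $\nu_1, \dotsc, \nu_d$ be the eigenvalues of $A$ listed with multiplicity. Since the spectral radius of a matrix is at most its operator norm, $\abs{\nu_i} \le \eta$ for every $i$. The eigenvalues of $M = I + A$ are the numbers $1 + \nu_i$, so the hypothesis $M \in \SL{d}$ reads $\prod_{i=1}^d (1+\nu_i) = 1$.

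Expanding this product in terms of the elementary symmetric polynomials $e_k(\nu)$ gives $1 + e_1(\nu) + \sum_{k=2}^d e_k(\nu) = 1$, hence $\Tr A = e_1(\nu) = -\sum_{k=2}^d e_k(\nu)$. Since $e_k(\nu)$ is a sum of $\binom{d}{k}$ products of $k$ eigenvalues, $\abs{e_k(\nu)} \le \binom{d}{k}\eta^k$, and therefore $\abs{\Tr M - d} \le \sum_{k=2}^d \binom{d}{k}\eta^k$.

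I would then split into two cases according to the size of $\eta$. If $\eta \le 1$, then $\eta^k \le \eta^2$ for all $k \ge 2$, so the bound above is at most $\bigl(\sum_{k=2}^d \binom{d}{k}\bigr)\eta^2 \le 2^d \eta^2$. If instead $\eta > 1$, the symmetric-polynomial estimate is useless (it grows like $\eta^d$), so I would fall back on the trivial bound $\abs{\Tr M - d} = \abs{\sum_i \nu_i} \le d\eta \le d\eta^2 \le d!\,\eta^2$, using $\eta \le \eta^2$ and $d \le d!$. In either case $\abs{\Tr M - d} \le \max(2^d, d!)\,\eta^2 \le (2^d + d!)\norm{M-I}^2$, which is the claim. (In the application one always has $\norm{M-I}\le\epsilon_0<1$, so only the first case is actually used, but stating the claim for all of $\SL{d}$ costs nothing.)

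The only mild subtlety — and the reason the stated constant is $2^d + d!$ rather than simply $2^d$ — is that the clean Vieta-type estimate degrades when $M$ is far from the identity, so one must switch to the crude bound $\abs{\Tr A} \le d\,\norm{A}$ in that regime; everything else is routine bookkeeping.
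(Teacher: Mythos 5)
Your proof is correct and takes a genuinely different route from the paper's. The paper works directly with the matrix entries $a_{ij}$ of $A = M - I$: it expands $1 = \det(I + A)$ via the Leibniz (permutation) formula, isolates the linear term $\Tr A$, observes that every remaining monomial has degree at least~$2$ (since a non-identity permutation displaces at least two indices), and bounds the sum of those monomials by a term count $c$ times $\max_{i,j}\abs{a_{ij}}^2$. You instead pass to eigenvalues: with $\nu_1,\dotsc,\nu_d$ the eigenvalues of $A$ over $\C$ (each of modulus at most $\norm{A}$, since the spectral radius is bounded by the operator norm), the constraint $\prod_i(1+\nu_i)=1$ gives $\Tr A = e_1(\nu) = -\sum_{k\ge 2}e_k(\nu)$, and the elementary-symmetric-polynomial estimate $\abs{e_k(\nu)}\le\binom{d}{k}\norm{A}^k$ finishes. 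Your version buys two things: the term count $\sum_{k\ge 2}\binom{d}{k}\le 2^d$ is cleaner and more transparent than counting monomials in the Leibniz expansion, and you handle the regime $\norm{A}>1$ explicitly via the trivial bound $\abs{\Tr A}\le d\norm{A}\le d!\,\norm{A}^2$, so the stated constant $2^d + d!$ is obtained for all $M\in\SL{d}$. By contrast, the paper's step $\abs{\Tr A}\le c\max_{i,j}\abs{a_{ij}}^2$ tacitly uses $\max\abs{a_{ij}}\le 1$ to absorb higher-degree monomials into the quadratic bound, so as written it really only covers the small-$\norm{A}$ regime (which is all the application needs). What the paper's version buys is that it avoids spectral theory and stays purely entry-wise; but since one can always pass to eigenvalues with algebraic multiplicity over $\C$, this is not a real saving. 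Morally both arguments are the same observation --- that $\det M = 1$ kills the linear coefficient of $\det(I + A)$ near the identity, reflecting that $\mathfrak{sl}(d)$ is traceless --- and you have given the eigenvalue face of it while the paper gives the entry-wise face.
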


\begin{proof}
Let $A := M - I$ and denote the entries of $A$ by $a_{ij}$ where $i , j = 1, \dotsc, d$. We know that $1 = \det M = \det (A + I)$, so expanding in terms of the $a_{ij}$'s, we have that
\begin{equation}
  1 =
  \sum_{\sigma \in \S{d}} \sgn(\sigma)
  \prod_{i=1}^d \of{ a_{i\sigma(i)} + \delta_{i\sigma(i)} }.
\end{equation}
Now let us simply take out the term with $\sigma = \idp$, the identity permutation:
\begin{equation}
  1
  = \prod_{i=1}^d \of{a_{ii} + 1}
  + \sum_{\sigma \in \S{d} \setminus \set{\idp}}
    \sgn(\sigma)
    \prod_{i=1}^d \of{ a_{i\sigma(i)} + \delta_{i\sigma(i)} }.
\end{equation}
And now expanding the first term we see
\begin{equation}
  1 = 1
  + \sum_{i=1}^d a_{ii}
  + \sum_{i \neq j} a_{ii} a_{jj}
  + \dotsb
  + a_{11} a_{22} \dotsm a_{dd}
  + \sum_{\sigma \in \S{d} \setminus \set{\idp}} \sgn(\sigma)
    \prod_{i=1}^d \of{ a_{i\sigma(i)} + \delta_{i\sigma(i)} },
\end{equation}
which implies
\begin{equation}
  - \Tr A
  = \sum_{i \neq j} a_{ii} a_{jj} + \dotsb + a_{11} a_{22} \dotsm a_{dd}
  + \sum_{\sigma \in \S{d} \setminus \set{\idp}} \sgn(\sigma)
    \prod_{i=1}^d \of{ a_{i\sigma(i)} + \delta_{i\sigma(i)} }.
\end{equation}

Now observe that each of the terms on the right hand side is quadratic in the $a_{ij}$'s---this is because any non-identity permutation displaces at least two items. Let $c \leq 2^d + d!$ denote the number of the terms present, which is constant in any fixed dimension $d$. Hence we have that
\begin{equation}
  \abs{\Tr M - d}
  = \abs{\Tr A}
  \leq c \max_{i,j} \; \abs{a_{ij}}^2
  \leq c \norm{A}^2
  = c \norm{M - I}^2
\end{equation}
where we used $\abs{a_{ij}} \leq \norm{A}$ in the last inequality (this follows by choosing the $j$-th standard basis vector in the definition of the operator norm).
\end{proof}

Note that this claim, i.e.\ that elements $\epsilon$-close to the identity have trace substantially smaller than $\epsilon$, is a reflection of the fact that the Lie algebra of the special linear group is traceless.

\section{Representations vs projective representations} \label{app:proj}

Throughout our proof of \Thm{ours}, we assumed that $\rho$ is an irrep of the group $G$. Here we show that the same construction works also for a projective irrep of $G$. In other words, even if $\rho(g_1) \rho(g_2) = e^{i\theta(g_1,g_2)} \rho(g_1 g_2)$ for some phase $\theta(g_1, g_2) \in [0,2\pi)$, our version of the Solovay-Kitaev theorem still holds.
As the Weyl operators merely form a projective representation, this allows our result to strictly generalize that of \cite{refocusing}.
Intuitively, such generalization is to be expected since global phases are non-physical in quantum theory. We make this precise below.

Suppose that we have a projective representation $\rho$ of a finite group $G$. It is convenient to think of $\rho(G)$ as a subset of the \emph{projective unitary group} $\PU{d}$ that consists of equivalence classes of elements of $\U{d}$ that differ only by global phase. Note that $\PU{d} = \PSU{d}$, the \emph{special} projective unitary group, since $\overline{\det(U)} U \in \SU{d}$ for any $U \in \U{d}$. Now, consider extending the projective representation $\rho$ in $\PSU{d}$ into a representation\footnote{This is known as a \emph{central extension} of the representation.} $\rho'$ in $\SU{d}$. Since
\begin{equation}
  \PSU{d} = \SU{d}/\Z_d,
\end{equation}
i.e.\ the only difference between projective and non-projective representations are factors of $e^{2\pi i/d}I$, this merely increases the size of the group by an integer multiple $k$ which is a divisor of $d$. Let us denote this larger group by $G'$.

Now consider applying our proof of \Thm{ours} to $\rho'$ and $G'$. The corresponding averaging operator is
\begin{equation}
  f'(W) := \prod_{g \in G'} \rho'(g) W \rho'(g)\ct.
\end{equation}
Our proof essentially uses two facts:
\begin{enumerate}
\item The trace of $W$ is small relative to its distance from the identity (\Clm{smalltrace}).
\item The traceless component of $W$ vanishes to lowest order because from \Clm{avg} we have that for any tracless $\Op$,
\begin{equation}
  \sum_{g \in G'} \rho'(g) \Op \rho'(g)\ct = 0.
  \label{eq:sumG'}
\end{equation}
\end{enumerate}
Note that if $g, h \in G'$ are such that $\rho'(g) = e^{i\theta} \rho'(h)$ for some $\theta \in \R$, then they contribute identical terms in the above sum, since the global phase factors commute through and cancel out.
Since the any projectively equivalent group elements $g,h$ contribute the same quantity to the sum, and $G'$ is simply a (projective) $k$-fold cover of $G$, this means that we can rewrite \Eq{sumG'} as
\begin{equation}
  k \sum_{g \in G} \rho'(g) \Op \rho'(g)\ct = 0,
\end{equation}
where we have simply summed over one representative from each set of projectively equivalent representatives.

Therefore, if we had instead considered averaging over the projective representation only using the original averaging operator (which involves a factor $k$ fewer products),
\begin{equation}
  f(W) := \prod_{g \in G} \rho(g) W \rho(g)\ct,
\end{equation}
the corresponding sum in \Eq{sumG'} (which is the above sum divided by $k$) would be 0 as well.
Therefore, the cancellation of lowest-order terms for the traceless component of the error---i.e. the second fact listed above---still holds.
Furthermore, the first fact is true independent of the group $G$ considered, and is simply a fact about matrices of determinant 1 which are close to the identity.
Therefore, the proof of \Thm{ours} works exactly as before if $\rho$ is merely a projective representation.

\section{Extension to the special linear group}
\label{app:SLnC}

In this appendix we describe how to extend our proof of Theorem \ref{thm:ours} to the non-unitary case.
Namely, we want to approximate some matrix $M \in \SL{d}$, our gate set $\G \subset \SL{d}$ is dense in $\SL{d}$, and it contains a (possibly non-unitary) irrep of a finite group $G$ as well as some additional gates $U_i\in \SL{d}$.

Let us argue that an $\epsilon$-approximation of $M$ can be obtained using the same algorithm as in the proof of \Thm{ours}, but with one minor change.
Namely, in the first step of the algorithm one must apply the non-unitary Solovay-Kitaev Theorem (with inverses) of Aharonov, Arad, Eban, and Landau \cite{aharonov2007polynomial} rather than the usual unitary Solovay-Kitaev Theorem (with inverses).
As before, the problem therefore reduces to finding an expression for the elements $U_i^{-1}$ in terms of $\G$.
Note that no other step of our proof requires any matrices to be unitary!
Recall that the heart of the proof was in showing that if $VU_i$ is $\epsilon$-close to $I$ then $f(VU_i)$ is $O(\epsilon^2)$-close to $I$, where $V$ denotes the initial $\epsilon_0$-approximation of $U_i^{-1}$. The key facts that we used to show this are:
\begin{itemize}
\item \Clm{avg}, which states that the traceless component of $VU_i$ vanishes to first order under the application of $f$ due to the orthogonality of irreps.
\item \Clm{smalltrace}, which states that matrices of determinant 1 which are $\epsilon$-close to the identity have trace $O(\epsilon^2)$.
\end{itemize}
Neither of these depends on the matrices involved being unitary---indeed the Schur orthogonality relations between irreps in \Eq{orth1} also hold for non-unitary irreps.
Therefore, our proof implies the following:

\begin{theorem}\label{thm:oursSL}
For any fixed $d\geq2$, suppose $\G \subset \SL{d}$ is a finite gate set that contains a (projective) irrep of some finite group $G$.
Let $r > 0$ be any fixed radius, let $B_r$ be the ball of radius $r$ about the identity in $\SL{d}$, and suppose that $\G$ densely generates all transformations in $B_r$.
Then there is an algorithm which outputs an $\epsilon$-approximation to any $M \in B_r$ using merely $O(\polylog(1/\epsilon))$ elements from $\G$.
\end{theorem}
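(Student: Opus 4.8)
The plan is to reduce \Thm{oursSL} to the argument already used for \Thm{ours} by observing that none of the core estimates were actually unitary in nature. First I would run the non-unitary, inverse-closed Solovay-Kitaev theorem of Aharonov, Arad, Eban and Landau \cite{aharonov2007polynomial} on the target $M \in B_r$ to obtain a sequence using elements of $\G$ together with the inverses $U_i^{-1}$ of the finitely many extra gates; as in the unitary case, it then suffices to compile each $U_i^{-1}$ to subpolynomial error using only $\G$. I would pick an initial length-$\ell_0$ approximation $V$ of $U_i^{-1}$ that is $\epsilon_0$-close in operator norm (using that $\G$ is dense in the ball $B_r$, taking $r$ large enough to contain a neighborhood of $U_i^{-1}$), set $\Op := VU_i - I$, and form the same symmetrization $f(W) := \prod_{g \in G} \rho(g) W \rho(g)^{\dagger}$ built from the (possibly non-unitary) irrep $\rho$.

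The heart of the argument is unchanged: expanding $f(VU_i)$ in powers of $\Op$ as in \Eq{fexpanded}, the first-order term is $\sum_{g} \rho(g) \Op \rho(g)^{\dagger}$, which by \Clm{avg} equals $|G| (\Tr \Op / d) I$ — and \Clm{avg} holds verbatim because the Schur orthogonality relations \Eq{orth1} are valid for non-unitary irreps (one should remark that for a non-unitary representation $\rho$ one works with $\rho(g)^{\dagger}$ or, more precisely, with the matrix-coefficient orthogonality in the form that does not presume $\rho(g)^{-1} = \rho(g)^{\dagger}$; the conjugation-averaging identity still collapses onto the identity component). Then \Clm{smalltrace}, which only uses $\det M = 1$, bounds $|\Tr \Op|$ by a dimension-dependent constant times $\norm{\Op}^2$, so the first-order term is $O(\epsilon_0^2)$. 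The remaining higher-order terms are bounded by the triangle inequality exactly as in the unitary proof, giving $\norm{f(VU_i) - I} \leq \epsilon_1$ with $\epsilon_1 \leq \epsilon_0 / 2$ for the same choice of $\epsilon_0$ as in \Eq{eps0}, and multiplying on the right by $U_i^{-1}$ (which cancels the trailing $U_i$) yields an $\epsilon_1$-approximation of $U_i^{-1}$ expressible purely in $\G$. Iterating $f$ gives doubly-exponentially shrinking error against exponentially growing length, hence the $O(\polylog(1/\epsilon))$ bound.

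The one genuinely non-unitary subtlety — the step I expect to be the main obstacle — is bounding the norms of the error terms in the expansion of $f(VU_i)$. In the unitary case each factor $\rho(g) \cdot \rho(g)^{\dagger}$ is norm-preserving, so a product of $k$ copies of $\Op$ conjugated by irrep elements has norm at most $\norm{\Op}^k$; when $\rho$ is non-unitary this fails, and each conjugation by $\rho(g)$ can inflate the operator norm by a factor $\kappa := \max_{g \in G} \norm{\rho(g)} \norm{\rho(g)^{-1}}$ (the condition number of the irrep). The clean fix is to change coordinates: since $\rho$ is a finite-group representation it is conjugate to a unitary one, i.e.\ there is $P \in \SL{d}$ (or $\GL$, rescaled) with $P \rho(g) P^{-1}$ unitary for all $g$; running the entire argument in the $P$-conjugated basis makes every conjugation norm-preserving again, at the cost of replacing the operator norm by the equivalent norm $\norm{P \cdot P^{-1}}$, which only affects the constants $\epsilon_0$ and $\ell_0$ (and the radius $r$) by factors depending on $\kappa$. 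Equivalently one absorbs $\kappa$ directly into a redefinition of $\epsilon_0$, shrinking it by a $\poly(\kappa)$ factor so that the geometric series in \Eq{geomsum} still converges; since $\kappa$ depends only on $G$ and $d$, this is harmless for the $\polylog$ conclusion. Everything else — the reduction to compiling $U_i^{-1}$, the cancellation via \Clm{avg} and \Clm{smalltrace}, and the doubly-exponential recurrence — transfers with no change, which is why the theorem holds as stated.
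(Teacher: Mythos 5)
Your proof takes essentially the same route as the paper's appendix: reduce via the non-unitary Solovay-Kitaev theorem of Aharonov et al.\ to compiling each $U_i^{-1}$, then run the symmetrization argument using \Clm{avg} and \Clm{smalltrace}, neither of which needs unitarity. The one substantive difference is that you explicitly address a subtlety the paper waves away when it asserts ``no other step of our proof requires any matrices to be unitary'': when $\rho$ is non-unitary, conjugation by $\rho(g)$ is no longer an isometry, so the higher-order terms in the expansion of $f(VU_i)$ pick up powers of the condition number $\kappa = \max_{g \in G} \norm{\rho(g)}\,\norm{\rho(g)^{-1}}$, and relatedly $f$ should be written with $\rho(g)^{-1}$ rather than $\rho(g)\ct$ so that the non-unitary form of Schur orthogonality actually applies (note $\rho(g)^{-1} = \rho(g^{-1})$ up to phase, so this is still in $\G$). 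Both of your fixes are correct: any finite-group representation is unitarizable, so one may conjugate the analysis into a basis where $\rho$ is unitary at the cost of passing to an equivalent norm, or equivalently shrink $\epsilon_0$ by a $\poly(\kappa)$ factor; since $\kappa$ depends only on $G$ and $d$, the $\polylog(1/\epsilon)$ conclusion is unaffected. Your version is therefore the more careful one, and the paragraph you devote to the norm-inflation issue is exactly the detail needed to make the paper's appendix airtight.
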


Other than the replacement of $\SU{d}$ with $\SL{d}$, the only thing that differs between this theorem and \Thm{ours} is the additional restriction that the matrix $M$ we are approximating is a finite distance from the identity (as is present in the non-unitary Solovay-Kitaev theorem of \cite{aharonov2007polynomial} as well). This restriction arises simply because $\SL{d}$ is not compact, and approximating elements very far from the identity requires longer sequences of gates.
For instance, it requires more applications of the gate
$\left(\begin{smallmatrix} 2 & 0 \\ 0 & 1/2 \end{smallmatrix}\right)$
to reach
$\left(\begin{smallmatrix} 2^{1000}& 0 \\ 0 & 2^{-1000} \end{smallmatrix}\right)$
than it requires to reach
$\left(\begin{smallmatrix} 2^{2} & 0 \\ 0 & 2^{-2} \end{smallmatrix}\right)$.
Since points arbitrarily far from the identity require arbitrarily long gate sequences to approximate, one cannot upper bound the length of sequences required to $\epsilon$-approximate arbitrary $M\in\SL{d}$ as a function of $\epsilon$ only---rather the length would depend on the distance of $M$ to the identity as well.
Restricting $M$'s distance to the identity allows one to upper bound the length of the approximating sequence in terms of $\epsilon$ only.

\end{document}